\documentclass[11pt]{article}

\usepackage{amsfonts,amssymb,amsmath,amsthm,bm}
\usepackage{latexsym,url,color,enumerate}
\usepackage{graphicx}
\setlength{\oddsidemargin}{0.2in}
\setlength{\topmargin}{0in} \setlength{\textheight}{8.5in}
\setlength{\textwidth}{6.2in}
\pagestyle{plain}

\usepackage{amsfonts,amsmath,amssymb,amsthm,caption,algorithm,algorithmic}
\usepackage{lipsum}
\usepackage{amsbsy}
\usepackage{mathrsfs}
\usepackage{graphicx,multirow,bm}
\usepackage{color}
\usepackage{enumerate}
\usepackage{diagbox}
\usepackage{amssymb}
\usepackage{psfrag}
\usepackage{bbm}
\usepackage[usenames,dvipsnames]{pstricks}
\usepackage{epsfig}
\usepackage{pst-grad} 
\usepackage{pst-plot} 
\usepackage{upgreek}

\allowdisplaybreaks[4]

\newtheorem{theorem}{Theorem}

\newtheorem{lemma}[theorem]{Lemma}

\newtheorem{problem}[theorem]{Problem}
\newtheorem{proposition}[theorem]{Proposition}

\newtheorem{conjecture}[theorem]{Conjecture}
\theoremstyle{definition}
\newtheorem{definition}[theorem]{Definition}
\newtheorem{remark}[theorem]{Remark}

\usepackage{hyperref}

\begin{document}
\title{On Compressed Sensing Matrices Breaking\\ the Square-Root Bottleneck}
\author{Shohei Satake
\thanks{Faculty of Advanced Science and Technology, 
Kumamoto University, Kumamoto 860-8555, Japan 
\{email: shohei-satake@kumamoto-u.ac.jp\}. 
This work has been supported by Grant-in-Aid for JSPS Fellows 20J00469 of the Japan Society for the Promotion of Science.
}
\ \ and \ 
Yujie Gu
\thanks{Department of Electrical Engineering - Systems, Tel Aviv University, Tel Aviv, Israel 
and 
Faculty of Information Science and Electrical Engineering, Kyushu University, Fukuoka, Japan
\{email: gu@inf.kyushu-u.ac.jp\}.
}
}

\date{}
\maketitle

\begin{abstract}
Compressed sensing is a celebrated framework in signal processing and has many practical applications. 
One of challenging problems in compressed sensing is to construct deterministic matrices having restricted isometry property (RIP).
So far, there are only a few publications providing deterministic RIP matrices  beating the square-root bottleneck on the sparsity level. 
In this paper, we investigate RIP of certain matrices defined by higher power residues modulo primes.
Moreover, we prove that the widely-believed generalized Paley graph conjecture implies that these matrices have RIP breaking the square-root bottleneck. 
\end{abstract}

%

\section{Introduction}
\label{sect-intro}
Matrices with {\it restricted isometry property} ({\it RIP}) have important applications to compressed processing. According to~\cite{Ca2008}, by means of RIP matrices, it is possible to measure and recover sparse signals using significantly fewer measurements than the dimension of the signals. 

\begin{definition}[Restricted isometry property, RIP]
Let $\Phi$ be a complex $M \times N$ matrix. Suppose that $K \leq M \leq N$ and $0 \leq \delta<1$. 
Then $\Phi$ is said to have the {\it $(K, \delta)$-restricted isometry property} ({\it RIP}) if 
\begin{equation}
    (1-\delta)||\mathbf{x}||^2 \leq ||\Phi \mathbf{x}||^2 \leq  (1+\delta)||\mathbf{x}||^2
\end{equation}
for every $N$-dimensional complex vector $\mathbf{x}$ with at most $K$ non-zero entries. Here $||\cdot||$ denotes the $\ell_2$ norm.
\end{definition}
According to Cand\`{e}s~\cite{Ca2008}, for applications to signal processing, it suffices to investigate the $(K, \delta)$-RIP matrix for some $\delta<\sqrt{2}-1$. 
In addition, the {\it sparsity} $K$ is expected to be as large as possible. 

On the other hand, it is known (\cite{BDMS2013}) that the problem checking whether a given matrix has RIP is NP-hard.
Thus many publications have attempted to look for deterministic constructions of matrices having RIP; see e.g. \cite{RMKADD2020}.  

Throughout this paper, we assume that all matrices have column vectors with unit $\ell_2$-norm.
Most of known constructions for RIP matrices are established via the {\it coherence} $\mu(\Phi)$ of an $M \times N$ matrix $\Phi$ with column vectors $\psi_1, \ldots, \psi_N$, where 
\begin{equation}
    \mu(\Phi):=\max_{1 \leq j\neq k \leq N}|\langle \psi_j, \psi_k \rangle|,
\end{equation}
and $\langle \cdot, \cdot \rangle$ denotes the standard inner product in the Hilbert space $\mathbb{C}^M$.
It can be proved (e.g. \cite{BDFKK2011}) that if $\mu(\Phi)=\mu$, then $\Phi$ has the $(K, (K-1)\mu)$-RIP, which implies the $(K, \delta)$-RIP with only $K=O(\sqrt{M})$, following from the well-known {\it Welch bound} (\ref{eq-Welch}) in \cite{W1974}. 
\begin{equation}
\label{eq-Welch}
    \mu(\Phi) \geq \sqrt{\frac{N-M}{M(N-1)}}.
\end{equation}
This barrier on the magnitude of the order of $K$ is popularly dubbed as the {\it square-root bottleneck} or {\it quadratic bottleneck}.
Accordingly, the following problem arises.

\begin{problem}[\cite{BDFKK2011}]
\label{prob-beatsqrt}
Construct an $M \times N$ matrix $\Phi$ having the $(K, \delta)$-RIP with $K=\Omega(M^{\gamma})$ for some $\gamma>1/2$ and $\delta<\sqrt{2}-1$.
\end{problem}
To our best knowledge, the first (unconditional) solution to this problem was given by Bourgain, Dilworth, Ford, Konyagin and Kutzarova~\cite{BDFKK2011}, later improved by Mixon~\cite{M2015}; see Table 1.

On the other hand, in \cite{BFMW2013}, Bandeira, Fickus, Mixon and Wong conjectured that the {\it Paley matrix}, a $(p+1)/2 \times (p+1)$ matrix defined by quadratic residues modulo an odd prime $p$, satisfies the $(K, \delta)$-RIP with $K\geq C_1\cdot p/\log^{C_2} p$ and some $\delta<\sqrt{2}-1$, where $C_{1}, C_{2}>0$ are universal constants.  
Under a number-theoretic conjecture in \cite{C1994}, 
Bandeira, Mixon and Moreira~\cite{BMM2017} proved that when $p \equiv 1 \pmod{4}$, the Paley matrix has the $(K, o(1))$-RIP with $K=\Omega(p^{\gamma})$ for some $\gamma>1/2$, which provides a conditional solution to Problem~\ref{prob-beatsqrt}; 
recently, assuming that the Paley graph conjecture (see Remark~\ref{rem-originalpgc}) holds, Satake~\cite{S2020} extended the result in~\cite{BMM2017} for general odd primes $p$, and also gave some implications to a Ramsey-theoretic problem proposed by Erd\H{o}s and Moser~\cite{EM1964}.
Also, under another type of number-theoretic conjecture, 
Arian and Yilmaz~\cite{AY2019} proved that for a sufficiently large prime $p \equiv 3 \pmod{4}$, a $(p+1)/2 \times p$ matrix obtained by deleting the last column from the Paley matrix (see Remark~\ref{rmk-Paley}) has the $(K, \delta)$-RIP for any $K<p^{5/7}/2$ and $\delta<1/\sqrt{2}$.
These results are summarized in Table 1.

\begin{table}[htbp]
\label{table-1}
\tabcolsep = 3pt
\begin{center}
\scalebox{0.95}{
  \begin{tabular}{|c|c|c|c|c|} \hline
    Ref. & $M$ & $N$ & $K$ & Comments\\ \hline 
    \cite{BDFKK2011}, \cite{M2013} & large $M$ & $N\leq M^{1+\gamma_1}$ & $\lfloor M^{1/2+\gamma'_1} \rfloor$ & any $\gamma'_1<\gamma_1$ and $\gamma_1 \fallingdotseq 5.5169 \times 10^{-28}$ \\ \hline
    \cite{M2015} & $p$ & $\Omega(p^{1+\eta})$ & $\Omega(p^{1/2+\gamma'_2})$ & some $\eta>0$, any $\gamma'_2<\gamma_2$ and $\gamma_2 \fallingdotseq 4.4466 \times 10^{-24}$ \\ \hline \hline 
    \cite{BMM2017} & $\frac{p+1}{2}$ & $p+1$ & $\Omega(p^{\gamma})$ & \begin{tabular}{c}
    some $\frac{1}{2}<\gamma<1$, 
    $p \equiv 1 \pmod{4}$
    \end{tabular}
    \\ \hline
    \cite{S2020} & $\frac{p+1}{2}$ & $p+1$ & $\Omega(p^{\gamma})$ & \begin{tabular}{c}
    some $\frac{1}{2}<\gamma<1$, 
    $p \equiv 3 \pmod{4}$
    \end{tabular}
    \\ \hline
    \cite{AY2019} & $\frac{p+1}{2}$ & $p$ & $< \frac{p^{5/7}}{2}$ & 
    \begin{tabular}{c}
    $p \equiv 3 \pmod{4}$
    \end{tabular}
    \\ \hline
    \begin{tabular}{c}
    This paper
    \end{tabular} & 
    \begin{tabular}{c}
    $\frac{p+k-1}{k}$
    \end{tabular}
    & $p$ & $\Omega(M^{\gamma})$ & 
    \begin{tabular}{c}
    some $\frac{1}{2}<\gamma<1$,\\
    $k|(p-1)$ s.t. $p^{\varepsilon_1}<k\leq p^{\varepsilon_2}$,\\
    any $0\leq \varepsilon_1<\varepsilon_2<\varepsilon_0$, \\
    some small $\varepsilon_0>0$\\
    \end{tabular}
    \\ \hline
  \end{tabular}
  }
  \end{center}
  \caption{Main result and known solutions to Problem~\ref{prob-beatsqrt}. Here $p$ denotes an odd prime number.
  }
\end{table}

In this paper, we aim to investigate RIP of certain matrices defined by higher power residues modulo primes.
In particular, under the widely-believed {\it generalized Paley graph conjecture} formulated in Section~\ref{sect-prelim}, we prove that these matrices are new solutions to Problem~\ref{prob-beatsqrt}, which forms our main theorem as described below.

\begin{theorem}
\label{thm-main}
Suppose that the generalized Paley graph conjecture holds. 
Let $\varepsilon_0>0$ be a small real number and $\varepsilon_1, \varepsilon_2$ real numbers with
$0\leq \varepsilon_1<\varepsilon_2<\varepsilon_0$.
Then for a sufficiently large prime $p$ such that $p-1$ contains a factor $k$ with $p^{\varepsilon_1}<k\leq p^{\varepsilon_2}$,
there are $M \times p$ matrices with $(\Omega(M^{\gamma}), o(1))$-RIP for some $\gamma>1/2$, where $M=(p+k-1)/k$.
\end{theorem}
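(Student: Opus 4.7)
The plan is to generalize the Paley-matrix construction by replacing the quadratic residue character with a nontrivial multiplicative character $\chi$ of order $k$ on $\mathbb{F}_p^*$, extended by $\chi(0)=0$. The matrix $\Phi$ I would work with has columns indexed by $\mathbb{F}_p$ (so $N=p$) and rows indexed by the coset representatives of the subgroup $H=\{x^k : x\in \mathbb{F}_p^*\}$ together with the element $0$, giving exactly $M=(p-1)/k+1=(p+k-1)/k$ rows. Each column $\psi_j$ should be a suitable normalization of a vector built from $\chi$ and additive shifts by $j$, designed so that for $i\neq j$ the inner product $\langle \psi_i,\psi_j\rangle$ reduces to a normalized multiplicative character sum in $i-j$, plus small correction terms from the zero row.

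The first half of the proof establishes Welch-level coherence: by Weil's bound for complete multiplicative character sums modulo $p$, each $|\langle \psi_i,\psi_j\rangle|$ is $O(1/\sqrt{p})=O(k^{1/2}/\sqrt{M})$, and for $k\leq p^{\varepsilon_2}$ with $\varepsilon_2$ small this yields $\mu(\Phi)\leq M^{-1/2+o(1)}$. On its own this gives only $K=O(\sqrt{M})$ up to lower-order factors, so to break the square-root bottleneck I would invoke the flat-RIP machinery of Bourgain--Dilworth--Ford--Konyagin--Kutzarova and Bandeira--Mixon--Moreira: it suffices to show that for every pair of disjoint sets $S,T\subset\mathbb{F}_p$ with $|S|,|T|\leq K$,
\begin{equation*}
\Bigl|\sum_{i\in S,\,j\in T}\langle \psi_i,\psi_j\rangle\Bigr|\leq \delta\sqrt{|S||T|}
\end{equation*}
with $\delta=o(1)$. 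After substitution, the left-hand side reduces (up to negligible terms) to a bilinear character sum of the form $\sum_{i\in S,\,j\in T}\chi(i-j)$.

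This is precisely where the generalized Paley graph conjecture enters. The conjecture should assert that for any fixed $\varepsilon>0$, all sufficiently large $p$, and all $S,T\subset\mathbb{F}_p$ with $|S|,|T|\geq p^{\varepsilon}$, such a bilinear character sum is at most $p^{-c}|S||T|$ for some $c=c(\varepsilon)>0$. Combined with the trivial estimate in the small regime, this forces $\delta=o(1)$ uniformly for $K$ up to $p^{1/2+\eta}$ for some $\eta=\eta(\varepsilon_0)>0$. Since $k\leq p^{\varepsilon_2}$ gives $M\geq p^{1-\varepsilon_2}$ and hence $p\leq M^{1/(1-\varepsilon_2)}$, the exponent $1/2+\eta$ in $p$ translates to an exponent strictly greater than $1/2$ in $M$ provided $\varepsilon_0$ (and thus $\varepsilon_2$) is chosen sufficiently small. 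This yields the desired $(\Omega(M^{\gamma}),o(1))$-RIP with some $\gamma>1/2$.

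The main obstacle I expect is controlling the secondary contributions to $\langle \psi_i,\psi_j\rangle$ coming from the higher-order character. In the quadratic case $\chi^2=1$ collapses many auxiliary sums, but for general $k$ one expects extra Gauss-sum-type factors and contributions from nontrivial powers $\chi^r$ with $1\leq r\leq k-1$; these must be bounded uniformly in $k$ so that the flat-RIP bilinear sum is genuinely of the form covered by the generalized Paley graph conjecture. The second delicate point is quantitative: the power-saving $\eta$ from the conjecture, the threshold exponent $\varepsilon$ in the conjecture, and the conversion from base $p$ to base $M$ must all be balanced against the allowed range $p^{\varepsilon_1}<k\leq p^{\varepsilon_2}$, and verifying that this range is nonempty while preserving $\gamma>1/2$ uniformly is the quantitative heart of the argument.
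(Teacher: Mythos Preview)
Your proposal is essentially the paper's approach: build an $M\times p$ matrix from additive characters evaluated on the $k$-th power residues so that each off-diagonal inner product is a $k$-th power Gauss sum, decompose this as $\frac{1}{p}\sum_{h=1}^{k-1}\chi_k^{-h}(a_i-a_j)G(\chi_k^h)$, bound the flat-RIP bilinear form termwise via the generalized Paley graph conjecture (applied to each nontrivial $\chi_k^{-h}$), and absorb the resulting factor of $k-1$ using $k\le p^{\varepsilon_2}$. Two small corrections: the rows are indexed by the \emph{elements} of $H$ (plus $0$), not by coset representatives, and the Gauss-sum decomposition is exact with no ``zero-row'' correction terms, so the obstacle you anticipate about secondary contributions resolves cleanly---each $|G(\chi_k^h)|=\sqrt{p}$ and the triangle inequality over $h$ costs exactly the factor $k-1$.
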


Theorem~\ref{thm-main} provides a new conditional answer to Problem~\ref{prob-beatsqrt}. 
As will be shown later, these matrices substantially contain the Paley matrix as a particular case, and realize the {\it compression ratio} $N/M$ (e.g. \cite{LLKR2015}) significantly better than that from the Paley matrix in general; see Remarks~\ref{rmk-Paley} and \ref{rmk-compratio}, respectively.

The remainder of this paper is organized as follows.
Section~\ref{sect-prelim} introduces some fundamental terminologies and results of finite fields and character sums, as well as some key notions related to RIP and number-theoretic results on factors of shifted primes.
Section~\ref{sect-const} defines the matrix which we investigate in this paper, and then Section~\ref{sect-mainproof} proves Theorem~\ref{thm-main}. 

\section{Preliminaries}
\label{sect-prelim}

\subsection{Finite fields and characters}
Throughout this paper, let $p$ denote a prime number.
Let $\mathbb{F}_p$ be a finite field with $p$ elements which can be identified to the residue ring $\mathbb{Z}/p\mathbb{Z}$.
It is well known that the {\it multiplicative group} of $\mathbb{F}_p$, denoted by $\mathbb{F}_p^*$, is a cyclic group of order $p-1$, consisting of all non-zero elements of $\mathbb{F}_p$.

The {\it canonical additive character} $\psi$ of $\mathbb{F}_p$ is a map from $\mathbb{F}_p$ to the unit circle in $\mathbb{C}$ such that $\psi(x):=\exp(\frac{2 \pi \sqrt{-1}}{p} \cdot x)$ for all $x \in \mathbb{F}_p$.
Notice that for every pair of $x, y \in \mathbb{F}_p$, we have $\psi(x+y)=\psi(x)\psi(y)$.
A {\it multiplicative character} $\chi$ of $\mathbb{F}_p$ is a map from $\mathbb{F}_p^*$ to the unit circle in $\mathbb{C}$ such that $\chi(xy)=\chi(x)\chi(y)$ for every pair of $x, y \in \mathbb{F}_p^*$. 
We also adopt the convention that $\chi(0):=0$.
Note that if $g$ is a generator of $\mathbb{F}_p^*$, then each multiplicative character $\chi$ is a map such that $\chi(x):=\exp(\frac{2 \pi \sqrt{-1}}{p-1}st)$ for all $x=g^t \in \mathbb{F}_p^*$ and some $0\leq s \leq p-2$.
The multiplicative character of $\mathbb{F}_p$ for $s=0$ is said to be {\it trivial}. 
The {\it order} of a multiplicative character $\chi$ is the minimum positive integer $k$ such that $\big(\chi(x)\big)^k=1$ for all $x \in \mathbb{F}_p^*$.
Notice that the order of the trivial multiplicative character of $\mathbb{F}_p$ is $1$.
For each $2 \leq k \leq p-2$ with $k|(p-1)$,
we can define a non-trivial multiplicative character $\chi_k$ of $\mathbb{F}_p$ of order $k$, that is,  $\chi_k(x):=\exp(\frac{2 \pi \sqrt{-1}}{k}t)$ for all $x=g^t \in \mathbb{F}_p^*$.
For each $1\leq h \leq k-1$, define the multiplicative character $\chi_k^{-h}$ of $\mathbb{F}_p$ as $$\chi_k^{-h}(x):=\exp\bigg(\frac{2 \pi \sqrt{-1}}{k}(-ht)\bigg)$$ 
for all $x=g^t \in \mathbb{F}_p^*$.
Notice that $\chi_k^{-h}$ is non-trivial for every $1\leq h \leq k-1$.

\subsection{Gauss sums}
This subsection provides two types of {\it Gauss sums}; for details, see e.g. \cite{BEW1998}.
\begin{definition}[$k$-th power Gauss sum]
Let $2 \leq k \leq p-2$.
Then for each $a \in \mathbb{F}_p$, the {\it $k$-th power Gauss sum} $\mathcal{G}_k(a)$ is defined as 
\begin{equation}
    \mathcal{G}_k(a):=\sum_{x \in \mathbb{F}_p}\psi(ax^k).
\end{equation}
\end{definition}

\begin{definition}[Gauss sum]
Let $a \in \mathbb{F}_p$ and $\chi$ a multiplicative character of $\mathbb{F}_p$.
Then define the {\it Gauss sum} $G(a, \chi)$ as
\begin{equation}
  G(a, \chi):=\sum_{x \in \mathbb{F}_p}\chi(x)\psi(ax).
\end{equation}
For simplicity, let $G(\chi):=G(1, \chi)$.
\end{definition}

The following two lemmas are useful in this paper. 

\begin{lemma} (\cite[Theorem 1.1.3 and (1.1.4)]{BEW1998})
\label{lem-Gauss}
For each $a \in \mathbb{F}_p^*$, it holds that
\begin{equation}
\mathcal{G}_k(a)=\sum_{h=1}^{k-1}G(a, \chi_k^h)
=\sum_{h=1}^{k-1}\chi_k^{-h}(a)G(\chi_k^h).
\end{equation}
\end{lemma}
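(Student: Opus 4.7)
The plan is to establish the two equalities separately, each via standard character-orthogonality tricks from the theory of Gauss sums. For the first equality $\mathcal{G}_k(a) = \sum_{h=1}^{k-1} G(a, \chi_k^h)$, I would begin from the definition $\mathcal{G}_k(a) = \sum_{x \in \mathbb{F}_p} \psi(ax^k)$, isolate the $x = 0$ contribution (which is $\psi(0) = 1$), and rewrite the remaining sum over $\mathbb{F}_p^*$ by pulling back along $x \mapsto x^k$. Since $k \mid (p-1)$, this map is $k$-to-$1$ onto the subgroup $H \subseteq \mathbb{F}_p^*$ of $k$-th power residues, so the sum becomes $k \sum_{y \in H} \psi(ay)$.

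Next I would convert the indicator $\mathbf{1}[y \in H]$ into a character sum via orthogonality: for $y \in \mathbb{F}_p^*$ one has $\sum_{h=0}^{k-1} \chi_k^h(y) = k \cdot \mathbf{1}[y \in H]$. Inserting this identity and swapping the order of summation expresses $k \sum_{y \in H} \psi(ay)$ as $\sum_{h=0}^{k-1} \sum_{y \in \mathbb{F}_p^*} \chi_k^h(y) \psi(ay)$. The $h = 0$ term reduces to $\sum_{y \in \mathbb{F}_p^*} \psi(ay) = -1$ since $a \in \mathbb{F}_p^*$ and $\sum_{y \in \mathbb{F}_p} \psi(ay) = 0$. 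For $h \geq 1$, the convention $\chi(0) := 0$ allows me to extend the inner sum to all of $\mathbb{F}_p$, giving precisely $G(a, \chi_k^h)$. The boundary contributions $+1$ (from $x = 0$) and $-1$ (from $h = 0$) cancel, yielding the first equality.

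For the second equality, the plan is the standard substitution $u = ax$, valid because $a$ is invertible. This rewrites $G(a, \chi_k^h) = \sum_{u \in \mathbb{F}_p} \chi_k^h(a^{-1} u) \psi(u) = \chi_k^h(a^{-1}) G(\chi_k^h)$, and multiplicativity of the character gives $\chi_k^h(a^{-1}) = \chi_k^{-h}(a)$, completing the proof.

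No step presents a genuine obstacle; the whole argument is routine once the correct orthogonality relation and substitution are selected. The only point requiring care is the bookkeeping of the $x = 0$ and $h = 0$ boundary terms, whose exact cancellation is what distinguishes the $k$-th power Gauss sum from the full twisted sum over non-trivial multiplicative characters.
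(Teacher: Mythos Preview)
Your argument is correct and is precisely the standard textbook derivation: character orthogonality to detect $k$-th powers, careful handling of the $x=0$ and $h=0$ boundary terms, and the substitution $u=ax$ to pull out $\chi_k^{-h}(a)$. Note, however, that the paper does not prove this lemma at all---it simply quotes it from \cite[Theorem~1.1.3 and (1.1.4)]{BEW1998}---so there is no ``paper's own proof'' to compare against; you have supplied exactly the argument that the cited reference contains.
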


\begin{lemma}(\cite[Theorem 1.1.4]{BEW1998})
\label{lem-Gauss2}
For each $1 \leq h \leq k-1$, it holds that
\begin{equation}
    |G(\chi_k^h)|=\sqrt{p}.
\end{equation}
\end{lemma}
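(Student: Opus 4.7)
The plan is to establish the standard identity $|G(\chi_k^h)|^2 = p$ for every non-trivial multiplicative character by a short double-sum computation. The first step is to note that since $\chi_k$ has order $k$ and $1 \leq h \leq k-1$, the character $\chi_k^h$ is non-trivial; for the remainder of the proof I will write $\chi := \chi_k^h$ and work with an arbitrary non-trivial multiplicative character, recovering the lemma at the end.

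Next, I would expand
\begin{equation}
|G(\chi)|^2 = G(\chi)\overline{G(\chi)} = \sum_{x,y \in \mathbb{F}_p} \chi(x)\overline{\chi(y)}\,\psi(x)\overline{\psi(y)} = \sum_{x \in \mathbb{F}_p} \sum_{y \in \mathbb{F}_p^*} \chi(x)\chi^{-1}(y)\,\psi(x-y),
\end{equation}
where the restriction to $y \in \mathbb{F}_p^*$ is legitimate because of the convention $\chi(0) = 0$, and where I used $\overline{\psi(y)} = \psi(-y)$ together with $\overline{\chi(y)} = \chi^{-1}(y)$ for $y \in \mathbb{F}_p^*$. The central idea is then the substitution $x = yz$ (with $y \in \mathbb{F}_p^*$ fixed and $z$ ranging over $\mathbb{F}_p$), which converts the expression into
\begin{equation}
|G(\chi)|^2 = \sum_{z \in \mathbb{F}_p} \chi(z) \sum_{y \in \mathbb{F}_p^*} \psi\bigl(y(z-1)\bigr),
\end{equation}
because $\chi(yz)\chi^{-1}(y) = \chi(z)$ when $y \neq 0$.

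The last step is to evaluate the inner sum by cases. When $z = 1$, the inner sum is simply $p-1$. When $z \neq 1$, the standard orthogonality $\sum_{y \in \mathbb{F}_p}\psi(ay) = 0$ for $a \neq 0$ gives $\sum_{y \in \mathbb{F}_p^*} \psi(y(z-1)) = -1$. Therefore
\begin{equation}
|G(\chi)|^2 = (p-1)\chi(1) - \sum_{z \in \mathbb{F}_p,\, z \neq 1} \chi(z) = (p-1) + 1 - \sum_{z \in \mathbb{F}_p} \chi(z) = p,
\end{equation}
where I used $\sum_{z \in \mathbb{F}_p} \chi(z) = \sum_{z \in \mathbb{F}_p^*} \chi(z) = 0$ by the non-triviality of $\chi$. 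Taking square roots yields $|G(\chi_k^h)| = \sqrt{p}$, as claimed.

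There is no serious obstacle here; the only point requiring care is consistently handling the convention $\chi(0)=0$ and correctly separating the $y=0$ and $z=1$ contributions so that the two orthogonality relations (additive and multiplicative) apply cleanly.
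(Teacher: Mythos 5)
Your argument is correct: the non-triviality of $\chi_k^h$ for $1 \leq h \leq k-1$, the expansion of $|G(\chi)|^2$, the substitution $x = yz$, and the two orthogonality evaluations are all handled properly and yield $|G(\chi_k^h)|^2 = p$. The paper itself offers no proof, simply citing \cite[Theorem 1.1.4]{BEW1998}; your computation is exactly the standard textbook argument behind that citation, so there is nothing to reconcile.
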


\subsection{Generalized Paley graph conjecture}
In this paper, we will make use of the following well-known {\it generalized Paley graph conjecture}; see e.g. \cite{C2008}, \cite{CZ2016}, \cite{GM2020}, \cite{K2008}, \cite{S2014}, \cite{Z1991} and references therein.


\begin{conjecture}[Generalized Paley graph conjecture]
\label{conj-pgc}
Let $p$ be an odd prime and $\chi$ a non-trivial multiplicative character of $\mathbb{F}_p$.
For $0<\alpha \leq 1$ and $\beta>0$, we say that the property $\mathcal{P}(\alpha, \beta)$ holds if for every pair of $S, T \subset \mathbb{F}_p$ with $|S|, |T|>p^{\alpha}$,
\begin{equation}
\label{eq-pgc}
   \Bigl |\sum_{s \in S, t \in T}\chi(s-t) \Bigr|\leq p^{-\beta}|S||T|.
\end{equation}
Then for each $0<\alpha \leq 1$, there exist $p(\alpha)>0$ and $\beta=\beta(\alpha)>0$ such that $\mathcal{P}(\alpha, \beta)$ holds for any prime $p>p(\alpha)$. 
\end{conjecture}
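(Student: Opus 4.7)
The plan is to attack Conjecture~\ref{conj-pgc} by the classical Burgess--Weil moment method, which reduces the bilinear character sum to a counting problem over tuples handled by the Weil bound. For a positive integer $r$, H\"older's inequality gives
$$\Bigl|\sum_{s \in S,\, t \in T}\chi(s-t)\Bigr|^{2r} \leq |S|^{2r-1}\sum_{s \in \mathbb{F}_p}\Bigl|\sum_{t \in T}\chi(s-t)\Bigr|^{2r}.$$
Expanding the $2r$-th power and pulling the $s$-summation inside yields inner sums of the form $\sum_{s \in \mathbb{F}_p}\chi(f(s))$, where $f(X)=\prod_{i=1}^{r}(X-t_i)\overline{(X-t_{r+i})}$ is a rational function indexed by a tuple $(t_1,\ldots,t_{2r})\in T^{2r}$ (with $\overline{\chi}$ realized multiplicatively via the inverse in $\mathbb{F}_p(X)$ since $|\chi|=1$). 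By Weil's theorem, whenever $f$ is not a $k$-th power in $\mathbb{F}_p(X)$ with $k$ the order of $\chi$, this inner sum is $O(r\sqrt{p})$; otherwise it is at most $p$ in absolute value.

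The next step is to count the \emph{degenerate} tuples for which $f$ is a $k$-th power. These are essentially tuples in which each $t_i$ from the first block is matched, with equal multiplicity, by some $t_{r+j}$ from the second block, giving at most $r!\,|T|^r$ such configurations. Combining the two regimes yields a bound of the shape
$$\Bigl|\sum_{s,\,t}\chi(s-t)\Bigr|^{2r} \;\ll\; |S|^{2r-1}\bigl(p\cdot r!\,|T|^r \;+\; r\sqrt{p}\cdot |T|^{2r}\bigr).$$
Taking $|S|,|T|>p^{\alpha}$ and optimizing $r$ in terms of $\alpha$ (roughly $r\asymp \lceil 1/(2\alpha)\rceil$), a standard computation extracts a quantitative saving of the form $p^{-\beta}|S||T|$ with an explicit $\beta=\beta(\alpha)>0$, valid once $\alpha$ exceeds the ``Burgess threshold'' near $1/4$. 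This establishes $\mathcal{P}(\alpha,\beta)$ in the regime $\alpha>1/4$ and gives the kind of existential statement the conjecture demands in that range.

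The principal obstacle---and the reason Conjecture~\ref{conj-pgc} remains open in its full generality---is extending the admissible range of $\alpha$ all the way down to $0$. Below the Burgess threshold the degenerate tuples overwhelm the Weil saving, and no purely moment-based argument is known to close the gap. Additive-combinatorial inputs, such as sum-product estimates in $\mathbb{F}_p$ due to Bourgain, Katz and Tao and their descendants, provide small additional savings for sets $S,T$ with structure, and Karatsuba-type techniques cover certain generalized arithmetic progressions, but a fully unconditional proof for arbitrary sets of size $p^{o(1)}$ appears to require a genuinely new idea going beyond the combination of Weil bounds with H\"older's inequality. Accordingly, any rigorous use of Conjecture~\ref{conj-pgc} in the present paper---in particular in the proof of Theorem~\ref{thm-main}---must treat it as a black-box hypothesis rather than attempting its resolution here.
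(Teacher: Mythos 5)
The statement you were asked about is a \emph{conjecture}: the paper never proves it (nor claims to), and uses it purely as a hypothesis in Theorems~\ref{thm-main} and~\ref{thm-main1}. Your final conclusion --- that Conjecture~\ref{conj-pgc} must be treated as a black box and cannot be established here --- is therefore the correct reading, and in that sense your proposal is consistent with the paper.

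However, your sketch of the partial results contains a genuine overclaim. Working through your own displayed inequality
$\bigl|\sum_{s,t}\chi(s-t)\bigr|^{2r} \ll |S|^{2r-1}\bigl(p\,r!\,|T|^{r} + r\sqrt{p}\,|T|^{2r}\bigr)$,
the second term forces $|S| \gtrsim r\,p^{1/2+2r\beta}$ before any saving of the form $p^{-\beta}|S||T|$ can be extracted; the first term is the harmless one. So the Weil--H\"older moment method for \emph{arbitrary} sets $S,T$ only yields $\mathcal{P}$-type bounds in the asymmetric range $|S|>p^{1/2+\alpha}$, $|T|>p^{\alpha}$ --- exactly the classical Karatsuba-type result the paper records in Remark~\ref{rem-originalpgc} --- and not in the symmetric regime $|S|,|T|>p^{1/4+\varepsilon}$ that you assert. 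The Burgess threshold near $1/4$ is specific to $S$ being an interval (or a set with translation--dilation structure that permits the Burgess amplification step); it does not apply to arbitrary subsets of $\mathbb{F}_p$, and indeed unconditional bounds for arbitrary sets of size $p^{1/4+\varepsilon}$, or even $p^{4/9}$ without additional hypotheses, are open: Chang's result quoted in the paper needs the doubling condition $|T+T|<K|T|$. So the claim ``this establishes $\mathcal{P}(\alpha,\beta)$ in the regime $\alpha>1/4$'' is incorrect as stated, even though it does not affect your (correct) bottom line that the conjecture itself is out of reach and is to be assumed, not proved.
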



\begin{remark}
\label{rem-originalpgc}
The {\it Paley graph conjecture} is Conjecture~\ref{conj-pgc} for the case that $\chi=\chi_2$, where $\chi_2$ is a non-trivial multiplicative character of $\mathbb{F}_p$ of order $2$.
According to the observations by Lenstra (see~\cite{Z1991}), it may be reasonable to believe that Conjecture~\ref{conj-pgc} would be true for any non-trivial $\chi$ if the Paley graph conjecture holds.
Indeed, it is well-known (e.g. \cite{K2008}) that Conjecture~\ref{conj-pgc} is true for {\it any} non-trivial multiplicative character $\chi$ of $\mathbb{F}_p$ and any $S, T \subset \mathbb{F}_p$ with $|S|>p^{1/2+\alpha}$ and $|T|>p^{\alpha}$, where $0<\alpha \leq 1/2$.
In~\cite{C2008}, Chang made a significant progress towards Conjecture~\ref{conj-pgc}, confirming the conjecture for {\it any} non-trivial $\chi$ and any $S, T \subset \mathbb{F}_p$ such that $|S|>p^{4/9+\alpha}$, $|T|>p^{4/9+\alpha}$ with $|T+T|<K|T|$ for some $K>0$ and arbitrary $\alpha>0$, where $T+T:=\{t_1+t_2 \mid t_1, t_2 \in T\}$.
For further and related results, see e.g. \cite{H2017}, \cite{S2013}, \cite{V2018}, \cite{VS2017}.
\end{remark}

\subsection{A theorem on factors of shifted primes}
To prove Theorem~\ref{thm-main}, it is necessary to certify the existence of infinitely many primes $p$ such that the {\it shifted prime} $p-1$ admits the prescribed factors.
In this paper, we shall use the following theorem due to Ford~\cite[Theorem 7]{F2008}.

\begin{theorem}[\cite{F2008}]
\label{thm-F2008}
For any $0\leq \varepsilon_1<\varepsilon_2\leq 1$ and any sufficiently large $x>0$, there exists a constant $C_{\varepsilon_1, \varepsilon_2}>0$ depending only on $\varepsilon_1$ and $\varepsilon_2$ such that there exist at least $C_{\varepsilon_1, \varepsilon_2} \cdot x/\log x$ primes $p\leq x$ so that $p-1$ contains at least one factor $k$ with $x^{\varepsilon_1}<k \leq x^{\varepsilon_2}$.
In particular, there exist infinitely many primes $p$ such that $p-1$ contains at least one factor $k$ with $p^{\varepsilon_1}<k \leq p^{\varepsilon_2}$.
\end{theorem}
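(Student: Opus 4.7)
My plan is to establish the lower bound $\ge C_{\varepsilon_1,\varepsilon_2}\, x/\log x$ by counting, for each prime $k$ in the target window, primes $p\le x$ with $p\equiv 1\pmod k$ using the Bombieri--Vinogradov theorem, and then passing from this multiplicity-counted total to distinct primes via a Cauchy--Schwarz / second-moment argument based on the Brun--Titchmarsh inequality.

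First I would reduce to the regime $0\le \varepsilon_1<\varepsilon_2\le 1/2$. A divisor $k\mid p-1$ with $k>\sqrt{p-1}$ has complementary divisor $(p-1)/k<\sqrt{p-1}$, so (up to boundary terms) the existence of a divisor of $p-1$ in $(x^{\varepsilon_1},x^{\varepsilon_2}]$ with $\varepsilon_1\ge 1/2$ is equivalent to the existence of one in $(x^{1-\varepsilon_2},x^{1-\varepsilon_1}]\subseteq(0,x^{1/2}]$. If $\varepsilon_1<1/2<\varepsilon_2$, split the window at $x^{1/2}$ and handle the halves separately. Thus we may assume $0\le \varepsilon_1<\varepsilon_2\le 1/2$. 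Now let $\mathcal K:=\{k\ \mathrm{prime}:x^{\varepsilon_1}<k\le x^{\varepsilon_2}\}$ and set
\[
T:=\sum_{k\in\mathcal K}\pi(x;k,1),\qquad W:=\#\{p\le x\ \mathrm{prime}:\exists\, k\in\mathcal K,\ k\mid p-1\}.
\]
Bombieri--Vinogradov gives $\sum_{k\le x^{1/2}/\log^A x}|\pi(x;k,1)-\mathrm{li}(x)/\varphi(k)|\ll x/\log^{A-2}x$ for any $A>0$. Using $\varphi(k)=k-1$ for prime $k$ and Mertens' theorem for the sum of reciprocals of primes in a geometric interval,
\[
T\ =\ (1+o(1))\,\frac{x}{\log x}\sum_{k\in\mathcal K}\frac{1}{k-1}\ \gg_{\varepsilon_1,\varepsilon_2}\ \frac{x}{\log x}.
\]

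For distinctness I would control the second moment
\[
M\ :=\ \sum_{k_1,k_2\in\mathcal K}\#\{p\le x\ \mathrm{prime}:k_1\mid p-1,\ k_2\mid p-1\}.
\]
The diagonal $k_1=k_2$ contributes $T$. For distinct primes $k_1,k_2\in\mathcal K$ one has $k_1k_2\le x^{2\varepsilon_2}\le x^{1-\eta}$ (after a harmless trimming that keeps $\varepsilon_2$ strictly below $1/2$), so Brun--Titchmarsh yields $\pi(x;k_1k_2,1)\ll x/(k_1k_2\log x)$. Summing and using $\sum_{k\in\mathcal K}1/k = O_{\varepsilon_1,\varepsilon_2}(1)$ gives $M\ll_{\varepsilon_1,\varepsilon_2} x/\log x$. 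Writing $m_p$ for the number of $k\in\mathcal K$ with $k\mid p-1$, Cauchy--Schwarz yields
\[
T^2\ =\ \Big(\sum_p m_p\Big)^2\ \le\ W\cdot\sum_p m_p^2\ =\ W\cdot M,
\]
so $W\ge T^2/M\gg_{\varepsilon_1,\varepsilon_2}x/\log x$, which is the desired bound.

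The main obstacle is the uniform control of $\pi(x;k,1)$ near the edge $k\asymp x^{1/2}$ and the handling of the boundary cases $\varepsilon_2=1/2$ and $\varepsilon_1=0$; these are absorbed by slightly shrinking the window into the final constant $C_{\varepsilon_1,\varepsilon_2}$ (the case $\varepsilon_1=0$ is in fact immediate because every odd prime satisfies $2\mid p-1$, so $k=2\in(x^0,x^{\varepsilon_2}]$ for $x\ge 2^{1/\varepsilon_2}$). For the concluding ``In particular'' statement, restrict to $p\in(x/2,x]$ and a trimmed window $(x^{\varepsilon_1'},x^{\varepsilon_2'}]$ with $\varepsilon_1<\varepsilon_1'<\varepsilon_2'<\varepsilon_2$: then $p^{\varepsilon_1}\le x^{\varepsilon_1}<k$ and $k\le x^{\varepsilon_2'}<(x/2)^{\varepsilon_2}<p^{\varepsilon_2}$ hold for sufficiently large $x$, and iterating along a dyadic sequence $x\to\infty$ produces infinitely many primes $p$ with the required factorization property of $p-1$.
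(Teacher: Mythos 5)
The paper gives no proof of this statement at all: it is imported verbatim from Ford's theorem on divisors of shifted primes (\cite{F2008}, Theorem 7), which is a much more precise result (it handles arbitrary intervals $(y,z]$, including narrow ones such as $z=2y$, with upper and lower bounds of the correct order). Your argument is therefore a genuinely different, self-contained route to the weaker fixed-exponent statement actually needed here, and its skeleton is sound: for $0<\varepsilon_1<\varepsilon_2<1/2$ the first moment $T=\sum_k \pi(x;k,1)$ over primes $k$ in the window is $\gg_{\varepsilon_1,\varepsilon_2} x/\log x$ by Bombieri--Vinogradov plus Mertens, the second moment is $O_{\varepsilon_1,\varepsilon_2}(x/\log x)$ by Brun--Titchmarsh applied to the moduli $k_1k_2\le x^{2\varepsilon_2}<x^{1-\eta}$, and Cauchy--Schwarz ($T^2\le WM$) gives a positive proportion of primes $p\le x$ with a prime factor of $p-1$ in the window; the case $\varepsilon_1=0$ is indeed trivial via $k=2$, and restricting to prime $k$ only strengthens the conclusion. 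What your route buys is an elementary proof from classical tools; what the citation buys is uniformity and strength far beyond what the paper uses. Two points you wave at (``up to boundary terms'', ``harmless trimming'') should be made explicit, though both fixes are routine: (i) the reflection step for $\varepsilon_1\ge 1/2$ produces a complementary divisor $(p-1)/m>x^{\varepsilon_1}$ only when $p$ is bounded below, say $p\in(x/2,x]$, so the whole moment computation must be run in the dyadic range (Bombieri--Vinogradov and Brun--Titchmarsh apply equally to $\pi(x;q,1)-\pi(x/2;q,1)$); note you cannot instead deduce the dyadic count from the full-range count, since the constant $C_{\varepsilon_1,\varepsilon_2}$ may well be smaller than $1/2$ and all the counted primes could a priori lie below $x/2$; (ii) the ``in particular'' clause needs the same dyadic version together with the trimmed exponents $\varepsilon_1<\varepsilon_1'<\varepsilon_2'<\varepsilon_2$, exactly as you indicate, after which taking $x$ along powers of $2$ yields infinitely many such $p$. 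The slightly nonstandard exponents in your quoted form of Bombieri--Vinogradov are immaterial, since any power-of-$\log$ saving suffices. With (i) and (ii) spelled out, your proof is a correct independent derivation of the statement as quoted.
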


\subsection{Flat restricted isometry property}
The following notion, {\it flat restricted isometry property} ({\it flat RIP}), has been  employed to verify the RIP of matrices in~\cite{BFMW2013} and~\cite{BDFKK2011}.

\begin{definition}[Flat RIP, \cite{BFMW2013,BDFKK2011}]
Let $\Phi$ be an $M \times N$ matrix with columns $\psi_1, \ldots, \psi_N$. Suppose that $K \leq M \leq N$ and $\theta>0$. 
Then $\Phi$ is said to have the {\it $(K, \theta)$-flat restricted isometry property} ({\it flat RIP}) if 
\begin{align}
   \Bigl |\Bigl\langle \sum_{i \in I}\psi_i, \sum_{j\in J}\psi_j \Bigr\rangle \Bigr|\leq \theta \sqrt{|I||J|}
\end{align}
for every pair of disjoint subsets $I, J \subset \{1, 2, \ldots, N\}$ with $|I|, |J| \leq K$.
\end{definition}

\begin{proposition}[\cite{BFMW2013,BDFKK2011}]
\label{prop-flatRIP}
Suppose that each column of $\Phi$ has unit $\ell_2$ norm. 
Then $\Phi$ has the $(K, 150\: \theta \log K)$-RIP provided that $\Phi$ has the $(K, \theta)$-flat RIP.
\end{proposition}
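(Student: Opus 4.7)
The plan is to reduce the RIP bound to a single off-diagonal Gram-sum, partition the support of the test vector into dyadic magnitude levels, and then apply flat RIP level-by-level, with a bisection trick for the within-level contributions. Let $\mathbf{x}$ be a $K$-sparse vector with support $S \subset \{1,\ldots,N\}$. Because each $\psi_i$ is unit-norm,
\[
\|\Phi \mathbf{x}\|^2 - \|\mathbf{x}\|^2 \;=\; \sum_{\substack{i,j \in S \\ i \neq j}} x_i\,\overline{x_j}\,\langle \psi_i, \psi_j\rangle,
\]
and it suffices to show this sum has modulus at most $150\,\theta\,(\log K)\,\|\mathbf{x}\|^2$. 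Decomposing each $x_i$ as $(a_i - b_i) + \sqrt{-1}\,(c_i - d_i)$ with $a_i,b_i,c_i,d_i \geq 0$ supported on disjoint index sets breaks the off-diagonal sum into sixteen analogous sums whose coefficients are all nonnegative reals, so at the cost of an absolute constant I may assume $\mathbf{x} \geq 0$ and $\|\mathbf{x}\| = 1$.

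Next I partition the support by coefficient magnitude: for $k = 0, 1, 2, \ldots$, set $A_k := \{ i \in S : 2^{-(k+1)} < x_i \leq 2^{-k} \}$. The identity $\sum_i x_i^2 = 1$ provides the two basic estimates that do all the heavy lifting at the summation step, namely $|A_k| \leq 4^{k+1}$ and $\sum_k 4^{-k} |A_k| \leq 4$. Indices with $x_i \leq 1/K$ contribute a trivially small amount to the off-diagonal sum and are absorbed separately, so only $O(\log K)$ dyadic levels carry nontrivial weight.

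For distinct levels $k \neq l$, the sets $A_k$ and $A_l$ are disjoint, each of size at most $K$, so the $(K,\theta)$-flat RIP directly yields $\bigl|\sum_{i \in A_k,\, j \in A_l} x_i x_j \langle \psi_i, \psi_j\rangle\bigr| \leq 2^{-k}\,2^{-l}\,\theta\sqrt{|A_k||A_l|}$; summing over $k \neq l$ and invoking Cauchy--Schwarz against the $O(\log K)$ active levels, together with $\sum_k 4^{-k} |A_k| \leq 4$, bounds the cross-level total by $O(\theta \log K)$. For the within-level piece from a single $A_k$, flat RIP cannot be applied directly because the two index sets must be disjoint. The remedy is to bisect $A_k = A_k' \sqcup A_k''$ into near-equal halves, apply flat RIP to the pair $(A_k', A_k'')$ to control the cross term, and recurse on each half; the recursion $f(n) \leq 2 f(n/2) + \theta n$ solves to $f(|A_k|) = O(\theta\,|A_k|\log K)$. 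Weighting by $x_i x_j \leq 4^{-k}$ and summing over $k$ uses $\sum_k 4^{-k} |A_k| \leq 4$ again to give a within-level total of $O(\theta \log K)$. Tracking absolute constants through the complex-to-real split and the bisection recursion then yields the explicit factor $150$.

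The main obstacle lies in the within-level contributions. A naive treatment would produce an additional $\log K$ factor from the bisection recursion on top of the $\log K$ coming from there being $O(\log K)$ active dyadic levels, giving a $(\log K)^2$ loss. Only a single $\log K$ survives because the bisection cost $\theta\,|A_k|\log K$ for level $k$ is weighted by $4^{-k}$, and $\sum_k 4^{-k} |A_k| \leq 4$ telescopes to an absolute constant by the $\ell_2$ identity rather than to something that grows with $K$. Verifying this telescoping carefully, and quantifying the constants so that the final prefactor is no larger than $150$ after the four-way real-positive/negative/imaginary split, is where the bulk of the technical care is concentrated.
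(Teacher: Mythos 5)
You should first note that the paper does not prove Proposition~\ref{prop-flatRIP} at all: it is quoted from \cite{BFMW2013,BDFKK2011}, so your proposal can only be measured against the standard argument of those references, which is indeed what you are reconstructing (dyadic magnitude levels, flat RIP across levels, a decoupling device for the same-level blocks). The overall architecture is sound, and your closing observation is the correct accounting: the extra logarithm produced by the bisection recursion at level $k$ is weighted by $4^{-k}$, and $\sum_k 4^{-k}|A_k|\le 4$ keeps the total at a single factor of $\log K$ rather than $(\log K)^2$.

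Two steps, however, are asserted where they actually require an argument. First, flat RIP is an \emph{unweighted} statement about $\bigl|\bigl\langle\sum_{i\in I}\psi_i,\sum_{j\in J}\psi_j\bigr\rangle\bigr|$, so your cross-level bound $\bigl|\sum_{i\in A_k,\,j\in A_l}x_ix_j\langle\psi_i,\psi_j\rangle\bigr|\le 2^{-k}2^{-l}\theta\sqrt{|A_k||A_l|}$ does not follow ``directly'': the residual weights $2^{k}x_i\in(1/2,1]$ cannot be bounded termwise because the inner products carry arbitrary phases. One needs the standard convexity (extreme-point) or level-set reduction: the supremum of the modulus over weights ranging in $[0,1]^{|I|}\times[0,1]^{|J|}$ is attained at a vertex, i.e.\ at an unweighted sum over subsets $I'\subset I$, $J'\subset J$, to which flat RIP applies with $\sqrt{|I'||J'|}\le\sqrt{|I||J|}$; the same remark applies to the cross terms inside your bisection recursion. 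Second, the claim that indices with $x_i\le 1/K$ ``contribute a trivially small amount'' fails as stated: using only the coherence bound $|\langle\psi_i,\psi_j\rangle|\le\theta$ coming from singleton flat RIP, their contribution is only bounded by a quantity of order $\theta\sqrt{K}$, which is much larger than $\theta\log K$. These coefficients must instead be kept inside the same flat-RIP machinery: since $|A_k|\le K$ for every level, $2^{-k}\sqrt{|A_k|}\le 2^{-k}\sqrt{K}$, so the levels with $2^{-k}\lesssim 1/K$ contribute only $O(1)$ to $\sum_k 2^{-k}\sqrt{|A_k|}$ by a geometric series, and only the $O(\log K)$ remaining levels enter the Cauchy--Schwarz step. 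Finally, the explicit constant $150$ is asserted rather than tracked; that is acceptable insofar as the proposition is being cited with that constant, but your sketch does not actually establish it.
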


\section{Construction of matrices}
\label{sect-const}
In this section, we define a type of matrices based on higher power residues modulo primes and later investigate their RIP in Section~\ref{sect-mainproof}. 

\begin{definition}
\label{def-Phi}
Let $p$ be an odd prime and $k|(p-1)$.
Let $R_p^{(k)}:=\{x^{k} \mid x \in \mathbb{F}_p^*\}$ denote the set of all non-zero $k$-th powers of $\mathbb{F}_p$; notice that $|R_p^{(k)}|=(p-1)/k$.
Suppose that elements of $\mathbb{F}_p$ and $R_p^{(k)}$ are labelled as $\mathbb{F}_p=\{0=a_1, a_2, \ldots, a_p\}$ and $R_p^{(k)}=\{b_1, b_2, \ldots, b_{(p-1)/k}\}$, respectively.
Recall that $\psi$ denotes the canonical additive character of $\mathbb{F}_p$.

Then the matrix $\Phi_p^{(k)}$ is defined as an $M \times N$ complex matrix with $M=\frac{p+k-1}{k}$ and $N=p$ of the following form.
\begin{align*}
\begin{split}
  \Phi_p^{(k)} :=
  \left[
    \begin{array}{cccc}
      \frac{1}{\sqrt{p}} & \frac{1}{\sqrt{p}} & \ldots & \frac{1}{\sqrt{p}}  \\ \\
      \sqrt{\frac{k}{p}} & \sqrt{\frac{k}{p}}\psi(b_1a_{2}) & \ldots & \sqrt{\frac{k}{p}}\psi(b_1a_{p})  \\ \\
      \sqrt{\frac{k}{p}} & \sqrt{\frac{k}{p}}\psi(b_2a_{2}) & \ldots & \sqrt{\frac{k}{p}}\psi(b_2a_{p}) \\ \\
      \vdots & \vdots & \ddots & \vdots \\ \\
      \sqrt{\frac{k}{p}} & \sqrt{\frac{k}{p}}\psi \bigl(b_{\frac{p-1}{k}}a_{2} \bigr) & \ldots & \sqrt{\frac{k}{p}}\psi \bigl(b_{\frac{p-1}{k}}a_{p} \bigr) 
    \end{array}
  \right]
  \end{split}
\end{align*}
\end{definition}
Note that each column $\phi_i$ of $\Phi_p^{(k)}$ has $\ell_2$-norm $1$ since for each $1 \leq i \leq p$,
\begin{align*}
    \begin{split}
    ||\phi_i||^2
    &=\langle \phi_i, \phi_i \rangle\\
    &=\frac{1}{p}+\frac{k}{p}\sum_{l=1}^{\frac{p-1}{k}}\psi\bigl((a_i-a_i)b_l \bigr)\\
    &=\frac{1}{p}+\frac{k}{p} \cdot \frac{p-1}{k}
    =1.\\
\end{split}
\end{align*}

\begin{remark}
\label{rmk-Paley}
Let $p$ be an odd prime.
The {\it Paley matrix} in \cite{BMM2017} and \cite{S2020} is a $(p+1)/2 \times (p+1)$ matrix obtained from the matrix $\Phi_p^{(2)}$ joining the vector $[(\sqrt{-1})^r, 0, \ldots, 0]^T$ as the last column, where $r=0$ if $p\equiv 1 \pmod{4}$ and $r=1$ if $p\equiv 3 \pmod{4}$.
The matrix investigated in \cite{AY2019} is exactly $\Phi_p^{(2)}$.
\end{remark}

\begin{remark}
\label{rmk-compratio}
In terms of applications to signal processing, it is desirable to construct an $M \times N$ matrix with RIP whose {\it compression ratio} $N/M$ (e.g. \cite{LLKR2015}) is as large as possible. 
By Definition~\ref{def-Phi}, the compression ratio of $\Phi_p^{(k)}$ is $kp/(p+k-1) \approx k$ as long as $k=o(p)$, while the compression ratio of the Paley matrix is only $2$.
Thus when $k\geq 3$, $\Phi_p^{(k)}$ has compression ratio significantly better than that from the Paley matrix in general. 
\end{remark}


\section{Proof of theorems}
\label{sect-mainproof}
This section aims to prove the main result Theorem~\ref{thm-main}. 
To that end, we shall prove the 
following Theorem~\ref{thm-main1}. 

\begin{theorem}
\label{thm-main1}
Assuming that Conjecture~\ref{conj-pgc} is true, and \begin{enumerate}
    \item[$1)$] 
    let $0<\alpha<1/2$ be a real number;
    \item[$2)$] 
    let $\beta_0=\beta_0(\alpha)>0$ be a real number such that $\alpha+2\beta_0<1/2$ and the property $\mathcal{P}(\alpha, \beta_0)$ holds for any non-trivial multiplicative character of $\mathbb{F}_q$ and any prime $q>p(\alpha)$, where $p(\alpha)>0$ is from Conjecture~\ref{conj-pgc};
    \item[$3)$]
    take real numbers $\varepsilon_1$ and $\varepsilon_2$ with $0 \leq \varepsilon_1<\varepsilon_2<\beta_0$;
    
    \item[$4)$]
    let $p>p(\alpha)$ be a prime such that there exists a factor $k$ of $p-1$ with $p^{\varepsilon_1}< k \leq p^{\varepsilon_2}$.
\end{enumerate}
Then 
for any real number $\tau$ with 
\begin{equation}
\label{eq-cond-5}
    \max \Big\{\alpha+\beta_0, \frac{1-\varepsilon_1}{2}-\beta_0 \Big\}<\tau<\frac{1}{2}-\varepsilon_2,
\end{equation}
the $(p+k-1)/k \times p$ matrix $\Phi_p^{(k)}$ in Definition~\ref{def-Phi} has the $(p^{\tau+\beta_0}, O(p^{\tau+\varepsilon_2-1/2+o(1)}))$-RIP.
\end{theorem}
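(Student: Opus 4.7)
The plan is to verify the $(K,\theta)$-flat RIP of $\Phi_p^{(k)}$ with $K = p^{\tau+\beta_0}$ and $\theta = O(p^{\tau+\varepsilon_2-1/2})$, then convert this to a genuine RIP via Proposition~\ref{prop-flatRIP}. Since $\log K = O(\log p) = p^{o(1)}$, the resulting $(K, 150\,\theta\log K)$-RIP yields the bound claimed in the theorem.

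First I would compute the off-diagonal inner products of the columns of $\Phi_p^{(k)}$. Writing $\phi_s$ for the column indexed by $s\in\mathbb{F}_p$ and using the identity $\sum_{x\in\mathbb{F}_p^*}\psi(ax^k) = k\sum_{b\in R_p^{(k)}}\psi(ab)$, a direct calculation (including the contribution $1/p$ from the first row) gives $\langle\phi_s,\phi_t\rangle = \mathcal{G}_k(t-s)/p$ whenever $s\neq t$. Expanding via Lemma~\ref{lem-Gauss} and using Lemma~\ref{lem-Gauss2} to replace $|G(\chi_k^h)|$ by $\sqrt{p}$, one obtains, for disjoint $I,J\subset\{1,\ldots,p\}$ with $A=\{a_i:i\in I\}$, $B=\{a_j:j\in J\}$ and $|I|,|J|\leq K$,
$$\Big|\Big\langle\sum_{i\in I}\phi_i,\sum_{j\in J}\phi_j\Big\rangle\Big| \leq \frac{1}{\sqrt{p}}\sum_{h=1}^{k-1}\Big|\sum_{s\in A,\,t\in B}\chi_k^{-h}(t-s)\Big|,$$
reducing the flat RIP to a bound on character sums over $A\times B$.

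I would then split on the sizes of $A$ and $B$. In the large case $\min(|A|,|B|)>p^\alpha$, every $\chi_k^{-h}$ with $1\leq h\leq k-1$ is non-trivial, so property $\mathcal{P}(\alpha,\beta_0)$ from Conjecture~\ref{conj-pgc} (available by hypothesis~2)) bounds each inner character sum by $p^{-\beta_0}|A||B|$. Summing over $h$, invoking $k\leq p^{\varepsilon_2}$, and using $\sqrt{|A||B|}\leq K = p^{\tau+\beta_0}$ delivers the estimate $p^{\tau+\varepsilon_2-1/2}\sqrt{|I||J|}$. In the complementary small case $\min(|A|,|B|)\leq p^\alpha$, I would abandon the conjecture and use the elementary coherence estimate $|\langle\phi_s,\phi_t\rangle|\leq (k-1)/\sqrt{p}$, which follows from $|\mathcal{G}_k(a)|\leq (k-1)\sqrt{p}$ (Lemmas~\ref{lem-Gauss} and~\ref{lem-Gauss2}). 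This yields a bound of order $k\,p^{\alpha/2}\sqrt{K}\,\sqrt{|I||J|}/\sqrt{p}$, which matches $p^{\tau+\varepsilon_2-1/2}\sqrt{|I||J|}$ precisely when $\tau\geq\alpha+\beta_0$, the first half of the lower bound in~(\ref{eq-cond-5}).

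The conceptual obstacle is the tension inherent in the case split: the generalized Paley graph conjecture is vacuous on sets of size at most $p^\alpha$, while the coherence bound is only competitive once the sparsity $K$ is large enough. The condition $\tau > \alpha+\beta_0$ is exactly what balances the two regimes, and its compatibility with $\tau < 1/2-\varepsilon_2$ is guaranteed by $\alpha+2\beta_0 < 1/2$ in hypothesis~2) together with $\varepsilon_2 < \beta_0$ in hypothesis~3). The remaining constraint $\tau > (1-\varepsilon_1)/2-\beta_0$ does not enter the flat RIP estimate itself but is what allows Theorem~\ref{thm-main1} to be applied in the proof of Theorem~\ref{thm-main}: it forces $K = p^{\tau+\beta_0}$ to exceed $M^{1/2}$ (since $M\leq p^{1-\varepsilon_1}$), thereby breaking the square-root bottleneck, while $\tau < 1/2-\varepsilon_2$ secures $\theta = o(1)$.
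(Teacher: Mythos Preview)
Your proposal is correct and follows essentially the same route as the paper: compute the off-diagonal inner products via the $k$-th power Gauss sum (the paper's Lemma~\ref{lem-inner}), reduce the flat RIP to a uniform bound on $\bigl|\sum_{s\in A,\,t\in B}\chi_k^{-h}(s-t)\bigr|$, then invoke Proposition~\ref{prop-flatRIP}. The only organizational difference is in the case split: the paper isolates the character-sum estimate as a standalone lemma (Lemma~\ref{lem-doubsum}) proving $\bigl|\sum\chi(s-t)\bigr|\le p^{\tau}\sqrt{|S||T|}$ uniformly via a three-way split (on $|S||T|\le p^{2\tau}$, then on $|T|\le p^{\alpha}$), whereas you split two ways on $\min(|A|,|B|)\gtrless p^{\alpha}$ directly at the flat-RIP level and phrase the small case as a coherence bound. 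Both arrive at the same $(p^{\tau+\beta_0},(k-1)p^{\tau-1/2})$-flat RIP, and your reading of the role of the constraint $\tau>(1-\varepsilon_1)/2-\beta_0$ (not needed for the flat-RIP estimate itself, only for $K>M^{1/2}$ in Theorem~\ref{thm-main}) is exactly right.
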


\begin{remark}
\label{rmk-main1}
We remark that for each $\alpha$, $\beta_0$, $\varepsilon_1$ and $\varepsilon_2$ satisfying the conditions 1), 2) and 3), it is possible to take a real number $\tau$ satisfying (\ref{eq-cond-5}).
Indeed, notice that $\alpha+\beta_0<1/2-\varepsilon_2$ holds since 
$\alpha+\beta_0<1/2-\beta_0<1/2-\varepsilon_2$, 
where the first and second inequalities follow from the conditions 2) and 3), respectively.
Also $(1-\varepsilon_1)/2-\beta_0<1/2-\varepsilon_2$ is valid, since $\varepsilon_2-\varepsilon_1/2<\beta_0$ holds for any  $\varepsilon_1$ and $\varepsilon_2$ satisfying the condition 3).
\end{remark}

Before proceeding to prove  Theorem~\ref{thm-main1}, we show that Theorem~\ref{thm-main} can be immediately derived from Theorems~\ref{thm-F2008} and \ref{thm-main1}.

\begin{proof}[Proof of Theorem~\ref{thm-main}] 
According to Theorem~\ref{thm-main1}, 
\begin{itemize}
    \item let $0<\alpha<1/2$ and $\beta_0=\beta_0(\alpha)>0$ be a real number satisfying the condition 2);
    \item take $\varepsilon_0=\varepsilon_0(\alpha)=\beta_0$ and then choose $\varepsilon_1$ and $\varepsilon_2$ so that $0\leq \varepsilon_1<\varepsilon_2<\varepsilon_0$;
    \item take a prime $p>p(\alpha)$ satisfying the condition 4), which is doable by Theorem~\ref{thm-F2008}.
\end{itemize}
Note that based on the condition 4), we have $p^{\varepsilon_1}< k \leq p^{\varepsilon_2}$, implying $M=(p+k-1)/k=O(p^{1-\varepsilon_1})$. Therefore Theorem~\ref{thm-main1} 
shows that the $M\times p$ matrix $\Phi_p^{(k)}$ has the $(K, o(1) )$-RIP, where $$K=p^{\tau+\beta_0}=\Omega(M^{\gamma})$$
and $\gamma$ is a real number such that $\gamma \geq (\tau+\beta_0)\cdot (1-\varepsilon_1)^{-1}>1/2$. 
This proves Theorem~\ref{thm-main}. 
\end{proof}

In order to prove Theorem~\ref{thm-main1}, we need the following two key lemmas. 

\subsection{Two key lemmas}
\begin{lemma}
\label{lem-doubsum}
Let $0<\alpha<1/2$ and $p$ be a prime with $p>p(\alpha)$.
Let $\chi$ be a non-trivial multiplicative character of $\mathbb{F}_p$.
Suppose that there exists $\beta=\beta(\alpha)>0$ such that  
$\alpha+\beta<1/2$ and the property $\mathcal{P}(\alpha, \beta)$ holds.
Let $\tau$ be an arbitrarily fixed real number with $\alpha+\beta<\tau < 1/2$. 
Then it holds that
\begin{equation}
\label{eq-flatpgc}
     \Bigl |\sum_{s \in S, t \in T}\chi(s-t) \Bigr| \leq p^{\tau}\sqrt{|S||T|}
\end{equation}
for every pair of $S, T \subset \mathbb{F}_p$ with $|S|, |T| \leq p^{\tau+\beta}$.
\end{lemma}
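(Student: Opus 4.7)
The plan is a short case analysis on whether or not both $|S|$ and $|T|$ exceed the threshold $p^{\alpha}$, so that we can invoke $\mathcal{P}(\alpha,\beta)$ where applicable and fall back to the trivial bound otherwise. The hypothesis $\alpha+\beta<\tau$ will make both estimates converge to the desired form $p^{\tau}\sqrt{|S||T|}$.

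First I would dispose of the \emph{large-set case}: assume $|S|,|T|>p^{\alpha}$. Then the hypothesis that $\mathcal{P}(\alpha,\beta)$ holds (applied to the non-trivial multiplicative character $\chi$) gives
\[
\Bigl|\sum_{s\in S,\,t\in T}\chi(s-t)\Bigr|\le p^{-\beta}|S||T|=p^{-\beta}\sqrt{|S||T|}\cdot\sqrt{|S||T|}.
\]
Using the global assumption $|S|,|T|\le p^{\tau+\beta}$, the second factor $\sqrt{|S||T|}$ is at most $p^{\tau+\beta}$, and multiplying this by $p^{-\beta}$ recovers exactly the claimed bound $p^{\tau}\sqrt{|S||T|}$.

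Next I would handle the \emph{small-set case}: suppose at least one of the two sets, say $S$ without loss of generality, satisfies $|S|\le p^{\alpha}$. Here $\mathcal{P}(\alpha,\beta)$ is not available, but the trivial triangle-inequality estimate $|\sum_{s,t}\chi(s-t)|\le|S||T|$ suffices. Writing $|S||T|=\sqrt{|S||T|}\cdot\sqrt{|S||T|}$ and bounding the second factor via $|S|\le p^{\alpha}$ and $|T|\le p^{\tau+\beta}$ yields
\[
\sqrt{|S||T|}\le p^{(\alpha+\tau+\beta)/2}\le p^{\tau},
\]
where the last inequality is equivalent to $\alpha+\beta\le\tau$, which is exactly the standing hypothesis $\alpha+\beta<\tau$. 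This gives $|S||T|\le p^{\tau}\sqrt{|S||T|}$, as required. The case $|T|\le p^{\alpha}$ is symmetric.

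Since every pair $(S,T)$ falls into one of these two cases, the lemma follows. There is really no main obstacle; the content is that the two regimes (PGC regime and trivial regime) are glued together cleanly by the single inequality $\alpha+\beta<\tau$, which is precisely why $\tau$ was introduced with a lower bound of $\alpha+\beta$ rather than $\alpha$ alone. The upper bound $\tau<1/2$ plays no role in the lemma itself and is only needed later when the flat-RIP estimate is converted into an RIP bound via Proposition~\ref{prop-flatRIP}.
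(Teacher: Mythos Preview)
Your argument is correct and follows essentially the same approach as the paper: bound the sum trivially when one set is small and invoke $\mathcal{P}(\alpha,\beta)$ when both exceed $p^{\alpha}$, with the choice $\alpha+\beta<\tau$ making both regimes land on $p^{\tau}\sqrt{|S||T|}$. Your two-case split (both $>p^{\alpha}$ vs.\ at least one $\le p^{\alpha}$) is in fact slightly cleaner than the paper's three-case version, which first separates out $|S||T|\le p^{2\tau}$ before reaching the same dichotomy; the content is identical.
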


\begin{proof}
Let $S, T \subset \mathbb{F}_p$ with $|S|, |T| \leq p^{\tau+\beta}$.
The proof is done by considering the following cases.

{\it Case 1.} 
If $|S||T|\leq p^{2\tau}$, then, by the trivial bound of $|\sum_{s \in S, t \in T}\chi(s-t)|$, we have
\begin{align*}
    \Bigl |\sum_{s \in S, t \in T}\chi(s-t) \Bigr|
    \leq |S||T|
    &=\sqrt{|S||T|}\cdot \sqrt{|S||T|} \\
    &\leq p^{\tau}\sqrt{|S||T|}.
\end{align*}

{\it Case 2.} 
Next, suppose that $|S||T|> p^{2\tau}$ and we may assume $|S|>p^{\tau}$ without loss of generality.

{\it Case 2.1.} 
If $|T| \leq p^{\alpha}$, then the following inequalities hold by the assumption that $|S|\leq p^{\tau+\beta}$$\colon$
\begin{align*}
    \Bigl |\sum_{s \in S, t \in T}\chi(s-t) \Bigr|
    \leq |S||T|
    &=\sqrt{|S||T|}\cdot \sqrt{|S||T|}\\ 
    & \leq \sqrt{p^{\alpha}\cdot p^{\tau+\beta}}\sqrt{|S||T|}\\
    &= p^{\frac{\tau+\alpha+\beta}{2}}\sqrt{|S||T|}\\
    &<p^{\tau}\sqrt{|S||T|},
\end{align*}
where the last inequality follows from the assumption that $\alpha+\beta<\tau$.

{\it Case 2.2.} 
If $|T|>p^{\alpha}$, recall that $|S|>p^{\tau}>p^{\alpha+\beta}>p^{\alpha}$, 
by the assumption we have the property $\mathcal{P(\alpha, \beta)}$ holds for $S$ and $T$, that is, 
\begin{align*}
    \Bigl |\sum_{s \in S, t \in T}\chi(s-t) \Bigr|
    &\leq p^{-\beta}|S||T|.
\end{align*}
Together with the assumption that $|S|, |T|\leq p^{\tau+\beta}$, we obtain
\begin{align*}
     \Bigl |\sum_{s \in S, t \in T}\chi(s-t) \Bigr|
    &\leq 
    p^{-\beta} \sqrt{|S||T|}\cdot \sqrt{|S||T|} \\
    & \leq p^{-\beta} \cdot \sqrt{p^{2(\tau+\beta)}} \cdot \sqrt{|S||T|}\\
    &=p^{\tau}\sqrt{|S||T|}.
\end{align*}
\end{proof}

\begin{lemma}
\label{lem-inner}
Let $\phi_i$ be the $i$-th column of $\Phi_p^{(k)}$. Then, for each $1 \leq i\neq j \leq p$, 
\begin{equation}
    \langle \phi_i, \phi_j \rangle=\frac{1}{p} \cdot \mathcal{G}_k(a_i-a_j)
    =\frac{1}{p}\sum_{h=1}^{k-1}\chi_k^{-h}(a_i-a_j)G(\chi_k^h).
\end{equation}
\end{lemma}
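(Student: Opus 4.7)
The plan is to expand the inner product directly from the explicit form of $\Phi_p^{(k)}$ in Definition~\ref{def-Phi}, convert the resulting sum over the set $R_p^{(k)}$ of non-zero $k$-th powers into a sum over all of $\mathbb{F}_p$, and then recognize the $k$-th power Gauss sum. The second equality is then immediate from Lemma~\ref{lem-Gauss}.

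Concretely, for $i\neq j$ I would compute
\[
  \langle \phi_i,\phi_j\rangle
  = \frac{1}{p} + \frac{k}{p}\sum_{l=1}^{(p-1)/k}\psi(b_l a_i)\,\overline{\psi(b_l a_j)}
  = \frac{1}{p} + \frac{k}{p}\sum_{b\in R_p^{(k)}}\psi\bigl(b(a_i-a_j)\bigr),
\]
using that $\overline{\psi(y)}=\psi(-y)$ and $\psi$ is additive. Setting $c:=a_i-a_j\in\mathbb{F}_p^{*}$, I would then use the fact that the map $y\mapsto y^{k}$ from $\mathbb{F}_p^{*}$ to $R_p^{(k)}$ is $k$-to-$1$, so
\[
  k\sum_{b\in R_p^{(k)}}\psi(bc)=\sum_{y\in\mathbb{F}_p^{*}}\psi(y^{k}c)=\mathcal{G}_k(c)-1,
\]
since the $y=0$ term contributes $\psi(0)=1$. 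Substituting this back gives
\[
  \langle \phi_i,\phi_j\rangle
  = \frac{1}{p} + \frac{1}{p}\bigl(\mathcal{G}_k(c)-1\bigr)
  = \frac{1}{p}\,\mathcal{G}_k(a_i-a_j),
\]
which is the first claimed equality.

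For the second equality, I would simply invoke Lemma~\ref{lem-Gauss}, which expresses $\mathcal{G}_k(a)$ for $a\in\mathbb{F}_p^{*}$ as $\sum_{h=1}^{k-1}\chi_k^{-h}(a)G(\chi_k^h)$; applying this with $a=a_i-a_j\neq 0$ finishes the proof.

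There is no real obstacle here beyond careful bookkeeping: one must get the conjugation convention for the inner product right (which decides whether $c$ is $a_i-a_j$ or $a_j-a_i$; either way the resulting Gauss sum expression is the stated one), remember to account for the contribution from the top row of $\Phi_p^{(k)}$, and correctly use the $k$-to-$1$ structure of $y\mapsto y^{k}$ to absorb the prefactor $k/p$ into $1/p$. Once these routine points are handled, the identification with $\mathcal{G}_k$ and the appeal to Lemma~\ref{lem-Gauss} are immediate.
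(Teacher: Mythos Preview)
Your proposal is correct and is essentially identical to the paper's proof: both expand $\langle\phi_i,\phi_j\rangle$ from Definition~\ref{def-Phi}, use the $k$-to-$1$ map $y\mapsto y^k$ on $\mathbb{F}_p^{*}$ to convert the sum over $R_p^{(k)}$ into $\frac{1}{k}\sum_{y\in\mathbb{F}_p^{*}}\psi((a_i-a_j)y^k)$, absorb the leading $1/p$ as the $y=0$ term to obtain $\frac{1}{p}\mathcal{G}_k(a_i-a_j)$, and then invoke Lemma~\ref{lem-Gauss}. Your explicit mention of $\overline{\psi(y)}=\psi(-y)$ is the one detail the paper leaves implicit.
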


\begin{proof}
Recall that $R_p^{(k)}=\{b_1, b_2, \ldots, b_{(p-1)/k}\}$ is the set of all non-zero $k$-th powers of $\mathbb{F}_p$. Note that for every $1 \leq l \leq \frac{p-1}{k}$, the equation $X^k \equiv b_l \pmod{p}$ has exactly $k$ distinct non-zero solutions; see e.g. \cite[p.11 and Lemma~10.4.1]{BEW1998}. 
Then we have
\begin{align}
\begin{split}
    \langle \phi_i, \phi_j \rangle
    &=\frac{1}{p}+\frac{k}{p}\sum_{l=1}^{\frac{p-1}{k}}\psi\bigl((a_i-a_j)b_l \bigr)\\
    &=\frac{1}{p}+\frac{k}{p}\cdot \frac{1}{k}\sum_{x \in \mathbb{F}_p^*}\psi\bigl((a_i-a_j)x^k \bigr)\\
    &=\frac{1}{p}\sum_{x \in \mathbb{F}_p}\psi\bigl((a_i-a_j)x^k \bigr)
    =\frac{1}{p} \cdot \mathcal{G}_k(a_i-a_j),\\
\end{split}
\end{align}
which, together with Lemma~\ref{lem-Gauss}, proves the lemma. 
\end{proof}

\subsection{Proof of Theorem~\ref{thm-main1}}

\begin{proof}[Proof of Theorem~\ref{thm-main1}]
Suppose that Conjecture~\ref{conj-pgc} holds.
Then for each $0<\alpha<1/2$ and each prime $p>p(\alpha)$, there exists some $\beta=\beta(\alpha)>0$ such that the property $\mathcal{P}(\alpha, \beta)$ holds for any non-trivial multiplicative character of $\mathbb{F}_p$.

Now fix $0<\alpha<1/2$ arbitrarily.
If $\alpha+2\beta<1/2$ holds, we may take $\beta_0=\beta$.
If $\alpha+2\beta \geq 1/2$, choose $\beta_0<\beta$ so that $\alpha+2\beta_0<1/2$; note that the property $\mathcal{P}(\alpha, \beta)$ implies the weaker property $\mathcal{P}(\alpha, \beta_0)$ since $\beta_0<\beta$.
Now take real numbers $\varepsilon_1$ and $\varepsilon_2$ with $0\leq \varepsilon_1<\varepsilon_2<\beta_0$.
Let $p>p(\alpha)$ be a prime such that there exists a factor $k$ of $p-1$ with $p^{\varepsilon_1}< k \leq p^{\varepsilon_2}$, which is possible in the light of Theorem~\ref{thm-F2008}.
Then according to Remark~\ref{rmk-main1}, we can pick a real number $\tau$ such that $\max\{\alpha+\beta_0, (1-\varepsilon_1)/2-\beta_0\}<\tau<1/2-\varepsilon_2$. 

It suffices to prove that the matrix $\Phi_p^{(k)}$ satisfies the $(p^{\tau+\beta_0}, (k-1)p^{\tau-1/2})$-flat RIP. 
Since if this is true, Proposition~\ref{prop-flatRIP} shows that $\Phi_p^{(k)}$ has the $(K, \delta)$-RIP with $K=p^{\tau+\beta_0}$ and $\delta=150 \cdot (k-1)p^{\tau-1/2} \cdot \log\big(p^{\tau+\beta_0}\big)$, implying that
$\delta=O(p^{\tau+\varepsilon_2-1/2+o(1)})$ since  $k\leq p^{\varepsilon_2}$.
This gives the desired conclusion in Theorem~\ref{thm-main1}.

To that end, recall that $\phi_i$ is the $i$-th column of $\Phi_p^{(k)}$ and $\chi_k$ is a non-trivial multiplicative character of $\mathbb{F}_p$ of order $k$. 
For every pair of disjoint subsets $I, J \subset \{1, \ldots, p\}$, we have
\begin{align}
\begin{split}
\label{eq-main1}
   \Bigl |\Bigl\langle \sum_{i \in I}\phi_i, \sum_{j\in J}\phi_j \Bigr\rangle \Bigr|
   &=\frac{1}{p} \cdot \biggl|\sum_{i \in I, j\in J}\sum_{h=1}^{k-1}\chi_k^{-h}(a_i-a_j)G(\chi_k^h) \biggr|\\
   &=\frac{1}{p} \cdot \biggl|\sum_{h=1}^{k-1}G(\chi_k^h)\sum_{i \in I, j\in J}\chi_k^{-h}(a_i-a_j) \biggr|\\
   &\leq \frac{1}{p} \cdot \sum_{h=1}^{k-1}|G(\chi_k^h)|\cdot \biggl|\sum_{i \in I, j\in J}\chi_k^{-h}(a_i-a_j) \biggr|\\
   &= \frac{1}{\sqrt{p}} \sum_{h=1}^{k-1} \biggl|\sum_{i \in I, j\in J}\chi_k^{-h}(a_i-a_j) \biggr|, 
\end{split}
\end{align}
where the first equality follows from Lemma~\ref{lem-inner}; the inequality follows from the triangle inequality; and the last equality follows from Lemma~\ref{lem-Gauss2}.  
Recall that $\chi_k^{-h}$ is a non-trivial multiplicative character of $\mathbb{F}_p$ for every $1 \leq h \leq k-1$. 
Since the condition 2) implies that $\alpha+\beta_0<1/2$,
if $|I|, |J| \leq p^{\tau+\beta_0}$, then (\ref{eq-main1}) together with Lemma~\ref{lem-doubsum} yields
\begin{align}
\label{eq-main2}
\begin{split}
    \Bigl |\Bigl\langle \sum_{i \in I}\phi_i, \sum_{j\in J}\phi_j \Bigr\rangle \Bigr|
   &\leq \frac{1}{\sqrt{p}} \cdot (k-1) \cdot  p^{\tau}\sqrt{|I||J|}\\
   &=(k-1)p^{\tau-\frac{1}{2}}\sqrt{|I||J|}.
\end{split}
\end{align}
Therefore $\Phi_p^{(k)}$ has the $(p^{\tau+\beta_0}, (k-1)p^{\tau-1/2})$-flat RIP. 
This completes the proof. 
\end{proof}




\section*{Acknowledgement}
The authors are grateful to Professor Igor Shparlinski for his constructive comments and pointing out the references~\cite{F2008}, \cite{H2017}, \cite{S2013}, \cite{V2018}, \cite{VS2017}. 

\end{document}